\documentclass{llncs}
\usepackage[utf8]{inputenc}
\usepackage[english]{babel}
\usepackage[normalem]{ulem}
\usepackage{amsmath}
\usepackage{amssymb}
\usepackage{amsfonts}
\usepackage{centernot}
\usepackage{fouriernc}
\usepackage{algorithm}
\usepackage{algorithmic}
\usepackage{graphicx}
\usepackage{hyperref}
\usepackage[utf8]{inputenc}
\usepackage[english]{babel}
 \usepackage{graphicx}
\usepackage{subcaption}
\usepackage[usenames, dvipsnames]{color}
\definecolor{red}{rgb}{1,0.2,0.2}

\newtheorem{assumption1}{Assumption}

\usepackage{amssymb}
\begin{document}
\title{Linear Temporal Logic Satisfaction in Adversarial Environments using Secure Control Barrier Certificates \thanks{This work was supported by the U.S. Army Research Office, the National Science Foundation, and the Office of Naval Research via Grants W911NF-16-1-0485, CNS-1656981, and N00014-17-S-B001 respectively.}}
%
%\titlerunning{Abbreviated paper title}
% If the paper title is too long for the running head, you can set
% an abbreviated paper title here
%
\author{Bhaskar Ramasubramanian\inst{1} \and
Luyao Niu\inst{2} \and Andrew Clark\inst{2} \and \\
Linda Bushnell\inst{1} \and Radha Poovendran \inst{1}}
\authorrunning{B. Ramasubramanian et al.}
% First names are abbreviated in the running head.
% If there are more than two authors, 'et al.' is used.
%
\institute{Department of Electrical and Computer Engineering, University of Washington, Seattle, WA 98195, USA.\\
\email{\{bhaskarr, lb2, rp3\}@uw.edu} \and
Department of Electrical and Computer Engineering, Worcester Polytechnic Institute, Worcester, MA 01609, USA.\\
\email{\{lniu, aclark\}@wpi.edu}}
\maketitle              % typeset the header of the contribution
\begin{abstract}
This paper studies the satisfaction of a class of temporal properties for cyber-physical systems (CPSs) over a finite-time horizon in the presence of an adversary, in an environment described by discrete-time dynamics. 
%This paper studies the synthesis of control policies for discrete-time dynamical systems in order to satisfy a temporal logic specification over a finite time-horizon in the presence of an adversary. 
The temporal logic specification is given in $safe-LTL_F$, a fragment of linear temporal logic over traces of finite length. 
The interaction of the CPS with the adversary is modeled as a two-player zero-sum discrete-time dynamic stochastic game with the CPS as defender. 
%For this dynamic game, we are interested in determining a policy for the defender in order for it to maximize the probability of satisfaction of the $safe-LTL_F$ formula under any adversary policy. 
We formulate a dynamic programming based approach to determine a stationary defender policy that maximizes the probability of satisfaction of a $safe-LTL_F$ formula over a finite time-horizon under any stationary adversary policy. 
%In order to determine a lower bound on the above satisfaction probability, we define a new entity called \emph{secure control barrier certificates (S-CBCs)}
We introduce \emph{secure control barrier certificates} (S-CBCs), a generalization of barrier certificates and control barrier certificates that accounts for the presence of an adversary, and use S-CBCs to provide a lower bound on the above satisfaction probability. 
When the dynamics of the evolution of the system state has a specific underlying structure, we present a way to determine an S-CBC as a polynomial in the state variables using sum-of-squares optimization. 
%The advantage of this approach is that it eliminates the need to discretize the state space of the system, and does not require the explicit computation of reachable sets of states, which makes it computationally more tractable. 
An illustrative example demonstrates our approach. 

\keywords{Linear temporal logic \and $safe-LTL_F$ \and Dynamic programming \and Secure control barrier certificate \and Sum-of-squares optimization.}
\end{abstract}

\section{Introduction}

Cyber-physical systems (CPSs) use computing devices and algorithms to inform the working of a physical system \cite{baheti2011cyber}. 
These systems are ubiquitous, and vary in size and scale from energy systems to medical devices. 
The wide-spread influence of CPSs such as power systems and automobiles makes their safe operation critical. 
Although distributed algorithms and systems allow for more efficient sharing of information among parts of the system and across geographies, they also make the CPS vulnerable to attacks by an adversary who might gain access to the distributed system via multiple entry points. 
Attacks on distributed CPSs have been reported across multiple application domains \cite{farwell2011stuxnet}, \cite{shoukry2013non}, \cite{slay2007lessons}, \cite{sullivan2017cyber}. 
In these cases, the damage to the CPS was caused by the actions of a stealthy, intelligent adversary. 
Thus, methods designed to only account for modeling and sensing errors may not meet performance requirements in adversarial scenarios. 
%The stealthy nature of adversaries who devise their attacks to damage to the system means that methods designed to only account for modeling and sensing errors may not meet performance requirements in the presence of an adversary. 
Therefore, it is important to develop ways to specify and verify properties that a CPS must satisfy that will allow us to provide guarantees on the operation of the system while accounting for the presence of an adversary.

In order to verify the behavior of a CPS against a rich set of temporal specifications, techniques from formal methods can be used \cite{baier2008principles}. 
Properties like safety, stability, and priority can be expressed as formulas in linear temporal logic (LTL) \cite{ding2014optimal}. 
These properties can then be verified using off-the-shelf model solvers \cite{cimatti1999nusmv}, \cite{kwiatkowska2011prism} that take these formulas as inputs. 
If the state space and the actions available to the agents are both finite and discrete, then the environment can be represented as a Markov decision process (MDP) \cite{puterman2014markov} or a stochastic game  \cite{bertsekas2015dynamic}. 
These representations have also been used as abstractions of continuous-state continuous action dynamical system models \cite{belta2017formal}, \cite{niu2018secure}. 
However, a significant shortcoming is that the computational complexity of abstracting the underlying system grows exponentially with the resolution of discretization desired \cite{chow1991optimal}, \cite{gordon1999approximate}. 

The method of barrier certificates (or barrier functions), which are functions of the states of the system was introduced in \cite{prajna2007framework}. 
Barrier functions provide a certificate that all trajectories of a system starting from a given initial set will not enter an unsafe region. 
The use of barrier functions does not require explicit computation of sets of reachable states, which is known to be undecidable for general dynamical systems \cite{lafferriere2001symbolic}, and moreover, it allows for the analysis of general nonlinear and stochastic dynamical systems. 
The authors of \cite{prajna2007framework} further showed that if the states and inputs to the system have a particular structure, computationally efficient methods can be used to construct a barrier certificate. 

Barrier certificates were used to determine probabilistic bounds on the satisfaction of an LTL formula by a discrete-time stochastic system in \cite{jagtap2018temporal}.  
A more recent work by the same authors \cite{jagtap2019formal} used control barrier certificates to synthesize a policy in order to maximize the probability of satisfaction of an LTL formula. 

Prior work that uses barrier certificates to study temporal logic satisfaction assumes a single agent, and does not study the case when the CPS is operating in an adversarial environment. 
To the best of our knowledge, this paper is the first to use barrier certificates to study temporal logic satisfaction for CPSs in adversarial environments. 
We introduce secure barrier certificates (S-CBCs), and use it to determine probabilistic bounds on the satisfaction of an LTL formula under any adversary policy. 
Further, definitions of barrier certificates and control barrier certificates in prior work can be recovered as special cases of S-CBCs.
%work that defines a new form of barrier functions in order to provide guarantees on the satisfaction of temporal logic specifications for CPSs that may be affected by the actions of an intelligent adversary.

\subsection{Contributions}

In this paper, we consider the setting when there is an adversary whose aim is to ensure that the LTL formula is not satisfied by the CPS (defender). 
The temporal logic specification is given in $safe-LTL_F$, a fragment of LTL over traces of finite length. 
%The interaction of the defender and adversary is modeled as a dynamic stochastic game, and we are interested in determining a sequence of actions for the defender in order for it to maximize the probability of satisfaction of the $safe-LTL_F$ formula under any adversary policy. 
%Using a notion of barrier certificates that we term secure control barrier certificates, we derive a lower bound for the aforementioned probability. 
%Our approach assumes two agents: a defender who aims to synthesize a policy to satisfy an LTL formula in finite time, and an adversary, who aims to prevent the defender from satisfying this formula. 
We make the following contributions: 
\begin{itemize}
\item We model the interaction between the CPS and adversary as a two-player dynamic stochastic game with the CPS as defender. The two players take their actions simultaneously, and these jointly influence the system dynamics. 

\item We present a dynamic programming based approach to determine a stationary defender policy to maximize the probability of satisfaction of an LTL formula over a finite time-horizon under any stationary adversary policy. 

\item In order to determine a lower bound on the above satisfaction probability, we define a new entity called \emph{secure control barrier certificates (S-CBCs)}. S-CBCs generalize barrier certificates and control barrier certificates to account for the presence of an adversary. 

\item When the evolution of the state of the dynamic game can be expressed as polynomial functions of the states and inputs, we use sum-of-squares optimization to compute an S-CBC as a polynomial function of the states. 

%\item S-CBCs are used to give a lower bound on the maximum probability of satisfaction of the LTL formula over the time-horizon under any adversary policy.
%
\item We present an illustrative example demonstrating our approach. 
\end{itemize}

\subsection{Outline of Paper}

We summarize related work on control barrier certificates and temporal logic satisfaction in Section \ref{RelatedWork}. 
Section \ref{Preliminaries} gives an overview of temporal logic and game-theoretic concepts that will be used to derive our results. 
The problem that is the focus of this paper is formulated in Section \ref{ProblemFormulation}. 
Our solution approach is presented in Section \ref{SolutionApproach}, where we define a dynamic programming operator to synthesize a policy for the defender in order to maximize the probability of satisfaction of the LTL formula under any adversary policy. 
We define a notion of secure control barrier certificates to derive a lower bound on the satisfaction probability, and are able to explicitly compute an S-CBC under certain assumptions. 
Section \ref{Example} presents an illustrative example, and we conclude the paper in Section \ref{Conclusion}. 

\section{Related Work}\label{RelatedWork}

The method of barrier functions was introduced in \cite{prajna2007framework} to certify that all trajectories of a continuous-time system starting from a given initial set do not enter an unsafe region. 
Control barrier functions (CBFs) were used to provide guarantees on the safety of continuous-time nonlinear systems with affine inputs for an adaptive cruise control application in \cite{ames2016control}. 
The notion of input-to-state CBFs that ensured the safety of nonlinear systems under arbitrary input disturbances was introduced in \cite{kolathaya2018input},  
%This work assumed that the disturbance was arbitrary, 
and safety was characterized in terms of the invariance of a set whose computation depended on the magnitude of the disturbance. 
The authors of \cite{steinhardt2012finite} relaxed the supermartingale condition that a barrier certificate had to satisfy in \cite{prajna2007framework} in order to provide finite-time guarantees on the safety of a system. 
The verification and control of a finite-time safety property for continuous-time stochastic systems using barrier functions was recently presented in \cite{santoyo2019verification}. 
Barrier certificates were used to verify LTL formulas for a deterministic, continuous-time nonlinear dynamical system in \cite{wongpiromsarn2015automata}. 
Time-varying CBFs were used to accomplish tasks specified in signal temporal logic in \cite{lindemann2019control}. 
A survey of the use of CBFs to design safety-critical controllers is presented in \cite{ames2019cbf}. 
The use of barrier certificates or CBFs in these works were all for continuous time dynamical systems and did not consider the effect of the actions of an adversarial player. 
%In comparison, this paper assumes an intelligent adversary, who potentially plays its `best response' to the action of the defender, at each time step for a discrete-time stochastic system. 

Barrier certificates in the discrete-time setting were used to analyze the reachable belief space of a partially observable Markov decision process (POMDP) with applications to verifying the safety of POMDPs in \cite{ahmadi2019control}, and for privacy verification in POMDPs in \cite{ahmadi2018privacy}. 
The use of barrier certificates for the verification and synthesis of control policies for discrete-time stochastic systems to satisfy an LTL formula over a finite time horizon was presented in \cite{jagtap2018temporal} and \cite{jagtap2019formal}. 
These papers also assumed a single agent, and did not account for the presence of an adversary. 
%We observe that the definition of barrier certificates in these works did not account for the effect of the input/ action of an adversarial player, and our definition of S-CBCs subsumes these as special cases. 

The authors of \cite{niu2019tracking} used barrier functions to solve a reference tracking problem for a continuous-time linear system subject to possible false data injection attacks by an adversary, with additional constraints on the safety and reachability of the system. 
Probabilistic reachability over a finite time horizon for discrete-time stochastic hybrid systems was presented in \cite{abate2008probabilistic}. 
This was extended to a dynamic stochastic game setting when there were two competing agents in \cite{ding2013stochastic}, and to the problem of ensuring the safety of a system that was robust to errors in the probability distribution of a disturbance input in \cite{yang2018dynamic}.
These papers did not assume that a temporal specification had to be additionally satisfied. 
%
%The analysis of Stackelberg equilibria of discrete-time dynamic games was analyzed for the deterministic case in \cite{stankova2011deterministic}, and for the stochastic setting in \cite{stankova2011stochastic}. 
%It was shown in \cite{stankova2011deterministic} that with the leader having an open-loop information structure, its optimal strategy in the deterministic setting is remains unaffected whether the follower has an open-loop or closed-loop information structure (however, the follower's optimal response may be different). 
%The same does not necessarily hold when there is uncertainty involved \cite{stankova2011stochastic}, or when the leader has a closed-loop information structure \cite{stankova2011deterministic}, \cite{stankova2011stochastic}. 

Determining a policy for an agent in order to maximize the probability of satisfying an LTL formula in an environment specified by an MDP was presented in \cite{ding2014optimal}. 
This setup was extended to the case when there were two agents- a defender and an adversary- who had competing objectives to ensure the satisfaction of the LTL formula in an environment specified as a stochastic game in \cite{niu2018secure}. 
These papers assume that the states of the system are completely observable, which might not be true in every situation. 
The satisfaction of an LTL formula in partially observable environments represented as POMDPs was studied in \cite{sharan2014finite} %, where the agent policy was approximated by a parameterized finite state controller (FSC) in order to find a computationally tractable optimal policy. 
and the extension to partially observable stochastic games with two competing agents, each with its own observation of the state of the system, % and its own FSC serving as a policy 
was formulated in \cite{bhaskar2019finite}. 

\section{Preliminaries}\label{Preliminaries}

In this section, we give a brief introduction to linear temporal logic and discrete-time dynamic stochastic games. 
Wherever appropriate, we consider a probability space $(\Omega, \mathcal{F}, \mathbb{P})$. 
We write $(X, \mathcal{B}(X))$ to denote the measurable space $X$ equipped with the Borel $\sigma-$algebra, and $\mathbb{R}_{\geq 0}$ to denote the set of non-negative real numbers. 

\subsection{Linear Temporal Logic}

Temporal logic frameworks enable the representation and reasoning about temporal information on propositional statements. 
\emph{Linear temporal logic (LTL)} is one such framework, where the progress of time is `linear'. 
An \emph{LTL formula} \cite{baier2008principles} is defined over a set of atomic propositions $\mathcal{AP}$, and can be written as: 
\begin{align*}
\phi:=\mathtt{T}|\sigma| \neg \phi | \phi \wedge \phi | \mathbf{X} \phi |\phi \mathbf{U} \phi, 
\end{align*}
where $\sigma \in \mathcal{AP}$, and $\mathbf{X}$ and $\mathbf{U}$ are temporal operators denoting the \emph{next} and \emph{until} operations. 
The semantics of LTL are defined over (infinite) words in $2^{\mathcal{AP}}$.

The syntax of linear temporal logic over finite traces, denoted $LTL_F$ \cite{de2013linear}, is the same as that of LTL. 
The semantics of $LTL_F$ is expressed in terms of \emph{finite-length words} in $2^{\mathcal{AP}}$. 
We denote a word in $LTL_F$ by $\eta$, write $|\eta|$ to denote the length of $\eta$, and $\eta_i$, $0<i<|\eta|$, to denote the proposition at the $i^{th}$ position of $\eta$. 
We write $(\eta, i) \models \phi$ when the $LTL_F$ formula $\phi$ is true at the $i^{th}$ position of $\eta$. 

\begin{definition}[$LTL_F$ Semantics]\label{LTLSemantics}
%Let $\eta^i = \eta_i\eta_{i+1}\dots$. 
The semantics of $LTL_F$ can be recursively defined in the following way: 
\begin{enumerate}
\item $(\eta,i) \models \mathtt{T}$;% if and only if (iff) $\eta_0$ is true; 
\item $(\eta,i) \models \sigma$ iff $\sigma \in \eta_i$; 
\item $(\eta,i) \models \neg \phi$ iff $(\eta,i) \not \models \phi$; 
\item $(\eta,i) \models \phi_1 \wedge \phi_2$ iff $(\eta,i) \models \phi_1$ and $(\eta,i) \models \phi_2$; 
\item $(\eta,i) \models \mathbf{X} \phi$ iff $i<|\eta|-1$ and $(\eta,i+1) \models \phi$; 
\item $(\eta,i) \models \phi_1 \mathbf{U} \phi_2$ iff $\exists j \in [i, |\eta|]$ such that $(\eta,j) \models \phi_2$ and for all $k \in [i, j), (\eta,k) \models \phi_1$. 
\end{enumerate}
Finally, we write $\eta \models \phi$ if and only if $(\eta, 0) \models \phi$. 
\end{definition}

Moreover, the logic admits derived formulas of the form: 
\emph{i)} $\phi_1 \vee \phi_2:=\neg(\neg \phi_1 \wedge \neg \phi_2)$; 
\emph{ii)} $\phi_1 \Rightarrow \phi_2:= \neg \phi_1 \vee \phi_2$; 
\emph{iii)} $\mathbf{F}\phi:=\mathtt{T} \mathbf{U} \phi  \text{  (eventually)}$; 
\emph{iv)} $\mathbf{G} \phi:= \neg \mathbf{F} \neg \phi \text{  (always)}$. 
The set $\mathcal{L}(\phi)$ comprises the language of finite-length words associated with the $LTL_F$ formula $\phi$. 
In this paper, we focus on a subset of $LTL_F$ called $safe-LTL_F$ \cite{saha2014automated}, that explicitly considers only safety properties \cite{kupferman1999model}. 
%This is not a restriction, since every $LTL_F$ formula has an equivalent representation as a $safe-LTL_F$ formula. 

\begin{definition}[$safe-LTL_F$ Formula]\label{safeLTL}
An $LTL_F$ formula is a $safe-LTL_F$ formula if it can be written in \emph{positive normal form (PNF)}\footnote{In PNF, negations occur only adjacent to atomic propositions.}, using the temporal operators $\mathbf{X}$ (next) and $\mathbf{G}$ (always). 
\end{definition}

Next, we define an entity that will serve as an equivalent representation of an $LTL_F$ formula, and will allow us to check if the $LTL_F$ formula is satisfied or not. 

\begin{definition}[Deterministic Finite Automaton]\label{DFA}
A deterministic finite automaton (DFA) is a quintuple $\mathcal{A} = (Q, \Sigma, \delta, q_0,F)$ where $Q$ is a nonempty finite set of states, $\Sigma$ is a finite alphabet, $\delta : Q \times \Sigma \rightarrow Q$ is a transition function, $q_0 \in Q$ is the initial state, and $F \subseteq Q$ is a set of accepting states.
\end{definition}

\begin{definition}[Accepting Runs]\label{AccRun}
A \emph{run} of $\mathcal{A}$ of length $n$ is a finite sequence of $(n+1)$ states $q_0\xrightarrow{\sigma_0}q_1\xrightarrow{\sigma_1}\dots \xrightarrow{\sigma_{n-1}} q_n$ such that $q_{i} \in \delta (q_{i-1}, \sigma_{i-1})$ for all $i \in [1,n]$ and for some $\sigma_0,\dots,\sigma_{n-1} \in \Sigma$. 
The run is \emph{accepting} if $q_n \in F$. 
We write $\mathcal{L}(\mathcal{A})$ to denote the set of all words accepted by $\mathcal{A}$. %, and $\mathcal{L}_N(\mathcal{A})$ to denote the set of words of length $N$ (excluding self-loops- i.e. $q_i \neq q_{i+1} \forall i < n$) accepted by $\mathcal{A}$. 
\end{definition}
%
%Observe that according to Definition \ref{AccRun}, a run of length $n$ will have $n$ transitions between the $(n+1)$ states, given by $q_0\xrightarrow{\sigma_0}q_1\xrightarrow{\sigma_1}\dots \xrightarrow{\sigma_n} q_n$, and this will be useful when we are interested in runs of a particular length later in the paper. 

Every $LTL_F$ formula $\phi$ over $\mathcal{AP}$ can be represented by a DFA $\mathcal{A}_{\phi}$ with $\Sigma = 2^{\mathcal{AP}}$ that accepts all and only those runs that satisfy $\phi$, that is, $\mathcal{L}(\phi) = \mathcal{L}(\mathcal{A}_{\phi})$ \cite{de2015synthesis}. 
The DFA $\mathcal{A}_{\phi}$ can be constructed by using a tool like Rabinizer4 \cite{kvretinsky2018rabinizer}. 

\subsection{Discrete-time Dynamic Stochastic Games}

We model the interaction between the CPS (defender) and adversary as a two-player dynamic stochastic game that evolves according to some known (discrete-time) dynamics \cite{basar1999dynamic}. 
The evolution of the state of the game at each time step is affected by the actions of both players.
%Since we are interested in the synthesis of policies in order to satisfy temporal logic properties over a finite interval of time, we do not require that the underlying dynamical system be stable. 

\begin{definition}[Discrete-time Dynamic Stochastic Game] 
A discrete-time dynamic stochastic game (DDSG) is a tuple $\mathcal{G} = (X, W, U_d, U_a, f, \mathcal{N}, \mathcal{AP}, L)$, where $X \subseteq \mathbb{R}^n$ and $W$ are Borel-measurable spaces representing the state-space and uncertainty space of the system, $U_d \subseteq \mathbb{R}^d$ and $U_a \subseteq \mathbb{R}^a$ are compact Borel spaces that denote the action sets of the defender and adversary, $f: X \times U_d \times U_a \times W \rightarrow X$ is a Borel-measurable transition function characterizing the evolution of the system, $\mathcal{N} = \{0,1,\dots,N-1\}$ is an index-set denoting the stage of the game, $\mathcal{AP}$ is a set of atomic propositions, and $L: X \rightarrow 2^{\mathcal{AP}}$ is a labeling function that maps states to a subset of atomic propositions that are satisfied in that state. 
\end{definition}

The evolution of the state of the system is given by: 
\begin{align}
x(k+1)&=f(x(k),u_d(k),u_a(k), w(k)); \quad x(0)=x_0 \in X; \quad k \in \mathcal{N}, \label{DDSGdynamics}
\end{align}
where $\{w(k)\}$ is a sequence of independent and identically distributed (i.i.d.) random variables with zero mean and bounded covariance.%, and $\{w(k)\}$ is also independent of $x_0$. 
%
%\begin{assumption}
%Assume that $f(\cdot,\cdot,\cdot)$ is \textbf{twice??} continuously differentiable in its arguments \textbf{ANY OTHER- convex, concave, is it too restrictive?}. 
%\end{assumption}

%
%The information gained and recalled by player $i$, $i \in \{d,a\}$, at stage $k \in \mathcal{N}$ is denoted by $\gamma_k^{(i)}$, an apriori known selection from $(x_0, \dots,x_{k-1})$. 
%Specifications of $\gamma_k^{(i)}$ for all $k \in \mathcal{N}$ characterizes the \emph{information structure} of the game for player $i$ \cite{basar1999dynamic}. 
%
%\begin{definition}[Information structure]
%For the two-player DDSG, we say that for all $k \in \mathcal{N}$, the information of player $i$, $i \in \{d,a\}$ has:
%\begin{itemize}
%\item an \emph{open-loop structure} if $\gamma_k^{(i)} = x_0$;
%\item a \emph{closed-loop structure} if $\gamma_k^{(i)} = (x_0, x_1,\dots,x_{k})$;
%\item a \emph{feedback structure} if $\gamma_k^{(i)} = x_{k}$.
%\end{itemize}
%\end{definition}
%
%\begin{definition}[Strategy]
%A deterministic strategy (or policy) for player $i$ at stage $k$ is a map $\mu_k^{(i)}: \gamma_k^{(i)} \rightarrow U_i$.\\
%A randomized strategy, on the other hand, yields a probability distribution over the action set.\\ 
%When $\gamma_k^{(i)} = x_{k}$ and $\mu_k^{(i)}$ is the same for all $k$, we get a stationary strategy.\\
%The overall strategy for player $i$ is then a sequence $\mu^{(i)}:=\{\mu_k^{(i)}\}_{k \in \mathcal{N}}$. 
%\end{definition}

In this paper, we focus on the \emph{Stackelberg setting} with the defender as leader and adversary as follower. 
The leader selects its inputs anticipating the worst-case response by the adversary. 
We assume that the adversary can choose its action based on the action of the defender \cite{ding2013stochastic}, and further, restrict our focus to stationary strategies for the two players. 
Due to the asymmetry in information available to the players, equilibrium strategies for the case when the game is zero-sum can be chosen to be deterministic strategies \cite{breton1988sequential}. 

\begin{definition}[Defender Strategy]
A stationary strategy for the defender is a sequence $\mu^{(d)}:=\{\mu_k^{(d)}\}_{k \in \mathcal{N}}$ of Borel-measurable maps $\mu_k^{(d)}: X \rightarrow U_d$. 
\end{definition}

\begin{definition}[Adversary Strategy]
A stationary strategy for the adversary is a sequence $\mu^{(a)}:=\{\mu_k^{(a)}\}_{k \in \mathcal{N}}$ of Borel-measurable maps $\mu_k^{(a)}: X \times U_d \rightarrow U_a$. 
\end{definition}
%
%\textbf{DEFINE agent utilities (assume additive and separable across stages of game), best response of adv at each stage of game}
%%We further assume that both players have an open-loop information structure. 
%%Studying the effects of different information structures is a subject of future work. 
%
%\textbf{ANALYZE OPEN-LOOP and FEEDBACK STRUCTURES SEPARATELY? See Basar book Ch. 7.}
%

\section{Problem Formulation}\label{ProblemFormulation}

For a DDSG $\mathcal{G}$, recall that the labeling function $L$ indicates which atomic propositions are true in each state. 

\begin{assumption1}\label{labelfunc}
We restrict our attention to labeling functions of the form $L: X \rightarrow \mathcal{AP}$. 
Then, if $\mathcal{AP} = (a_1, \dots, a_p)$, $\mathcal{AP}$ and $L$ will partition the state space as $X:= \cup_{i=1}^p X_i$, where $X_i:=L^{-1}(a_i)$. 
We further assume that $X_i \neq \emptyset$ for all $i$. 
\end{assumption1}

\begin{remark}
Through the remainder of the paper, we interchangeably use $x_k$ or $x(k)$ to denote the state at time $k$. 
\end{remark}

Given a sequence of states $\mathbf{x}_{\mathcal{N}}:=(x_0, x_1,\dots, x_{N-1})$, using Assumption \ref{labelfunc}, if $\eta_k = L(x_k)$ for all $k \in \mathcal{N}$, then we can write $L(\mathbf{x}_{\mathcal{N}})=(\eta_0, \eta_1,\dots,\eta_{N-1})$. 

\begin{definition}[LTL Satisfaction by DDSG]\label{LTL+DDSG}
For a DDSG $\mathcal{G}$ and a $safe-LTL_F$ formula $\phi$, we write $\mathbb{P}_{\mu^{(d)}, \mu^{(a)}}^{x_0}\{L(\mathbf{x}_{\mathcal{N}}) \models \phi\}$ to denote the probability that the evolution of the DDSG starting from $x(0)=x_0$ under player policies $\mu^{(d)}$ and $\mu^{(a)}$ satisfies $\phi$ over the time horizon $\mathcal{N} = \{0,1,\dots,N-1\}$. 
\end{definition}

We are now ready to formally state the problem that this paper seeks to solve. 

\begin{problem}\label{Problem}
%Given a discrete-time dynamic game $\mathcal{G} = (X, W, U_d, U_a, f, \mathcal{N}, \mathcal{AP}, L)$ that evolves according to the dynamics in Equation (\ref{DDSGdynamics}), a $safe-LTL_F$ formula $\phi$, and $\alpha \in (0,1)$, compute a policy for the defender, $\mu^{(d)}$, such that $\mathbb{P}_{\mu^{(d)}, \mu^{(a)}}^{x_0}\{L(\mathbf{x}_{\mathcal{N}}) \models \phi\} \geq \alpha$ over the time horizon $\mathcal{N} = \{0,1,\dots,N-1\}$ under any adversary policy $\mu^{(a)}$ for all $x_0 \in L^{-1}(a_j)$ for some $a_j \in \mathcal{AP}$. 
%
Given a discrete-time dynamic game $\mathcal{G} = (X, W, U_d, U_a, f, \mathcal{N}, \mathcal{AP}, L)$ that evolves according to the dynamics in Equation (\ref{DDSGdynamics}) and a $safe-LTL_F$ formula $\phi$, determine a policy for the defender, $\mu^{(d)}$, that maximizes the probability of satisfying $\phi$ over the time horizon $\mathcal{N} = \{0,1,\dots,N-1\}$ under any adversary policy $\mu^{(a)}$ for all $x_0 \in L^{-1}(a_j)$ for some $a_j \in \mathcal{AP}$. 
That is, compute:
\begin{align}
\sup_{\mu^{(d)}}~ \inf_{\mu^{(a)}}~ \mathbb{P}_{\mu^{(d)}, \mu^{(a)}}^{x_0}\{L(\mathbf{x}_{\mathcal{N}}) \models \phi\} \label{ProbSat}
\end{align}
\end{problem}

\section{Solution Approach}\label{SolutionApproach}

In this section, we present a dynamic programming approach to determine a solution to Problem \ref{Problem}. 
Our analysis is motivated by the treatment in \cite{ding2013stochastic} and \cite{yang2018dynamic}.%, where the authors synthesize safe-policies for in order for a discrete-time stochastic system to reach a subset of desired states under any adversary policy. 

We then introduce the notion of secure control barrier certificates (S-CBCs), and use these to provide a lower bound on the probability of satisfaction of the $safe-LTL_F$ formula $\phi$ for a defender policy under any adversary policy in terms of the accepting runs of length less than or equal to the length of the time-horizon of interest of a DFA associated with $\phi$. 
For systems whose evolution of states can be written as a polynomial function of states and inputs, we present a sum-of-squares optimization approach in order to compute an S-CBC. 

S-CBCs generalize barrier certificates \cite{jagtap2018temporal} and control barrier certificates \cite{jagtap2019formal} to account for the presence of an adversary. 
A difference between the treatment in this paper and that of \cite{jagtap2018temporal}, \cite{jagtap2019formal} is that we define S-CBCs for stochastic dynamic games, while the latter papers focus on stochastic systems with a single agent.

\subsection{Dynamic Programming for $safe-LTL_F$ Satisfaction}

We introduce a dynamic programming (DP) operator that will allow us to recursively solve a Bellman equation related to Equation (\ref{ProbSat}) backward in time. 
First, observe that we can write the satisfaction probability in Definition \ref{LTL+DDSG} as:
\begin{align}
\mathbb{P}_{\mu^{(d)}, \mu^{(a)}}^{x_0}\{L(\mathbf{x}_{\mathcal{N}}) \models \phi\} &= \mathbb{E}_{\mu^{(d)}, \mu^{(a)}}\{\prod \limits_{k \in \mathcal{N}} \mathbf{1}(L(x_{k}) \models \phi)|x(0)=x_0\}, 
\end{align}
where $\mathbb{E}_{\mu^{(d)}, \mu^{(a)}}$ is the expectation operator under the probability measure $\mathbb{P}_{\mu^{(d)}, \mu^{(a)}}$ induced by agent policies $\mu^{(d)}$ and $\mu^{(a)}$. 
$\mathbf{1}(\cdot)$ is the indicator function, which takes value $1$ if its argument is true, and $0$ otherwise. 

Assume that $V : X \rightarrow [0,1]$ is a Borel-measurable function. 
A DP operator $T$ can then be characterized in the following way: 
\begin{align}
V(x_{N-1}) &=\mathbf{1}(L(x_{N-1}) \models \phi)\\
(TV)(x_k)&:= \sup_{u_d} ~ \inf_{u_a}~ \mathbf{1}(L(x_k) \models \phi) \int_X V(f(x_k,u_d,u_a,w))dx_{k+1}, \label{DPOperator}
\end{align}
where $dx_{k+1} \equiv (dx_{k+1}|x_k,u_d,u_a)$ is a probability measure on the Borel space $(X,\mathcal{B}(X))$. 

The following results adapts Theorem 1 of \cite{ding2013stochastic} to the case of temporal logic formula satisfaction over a finite time-horizon. 

\begin{theorem}\label{DPThm}
Assume that the DDSG $\mathcal{G}$ has to satisfy a $safe-LTL_F$ formula $\phi$ over horizon $\mathcal{N}$. 
Let the DP operator $T$ be defined as in Equation (\ref{DPOperator}). Additionally, if $dx_k \equiv (dx_{k+1}|x_k,u_d,u_a)$ is continuous, then, 
\begin{align}
\sup_{\mu^{(d)}}~ \inf_{\mu^{(a)}}~ \mathbb{P}_{\mu^{(d)}, \mu^{(a)}}^{x_0}\{L(\mathbf{x}_{\mathcal{N}}) \models \phi\} &= (T^NV) (x_0), 
\end{align}
where $T^N:=T \circ T \circ \dots \circ T$ ($N$ times) is the repeated composition of the operator $T$. 
\end{theorem}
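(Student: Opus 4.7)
The plan is to prove the theorem by backward induction on the time horizon, the standard route for dynamic programming in zero-sum stochastic games. First I would define, for each $k \in \mathcal{N}$, the stage-$k$ value function
\begin{align*}
V_k(x) := \sup_{\mu^{(d)}}\inf_{\mu^{(a)}}\mathbb{E}_{\mu^{(d)},\mu^{(a)}}\Bigl[\prod_{j=k}^{N-1}\mathbf{1}(L(x_j)\models\phi)\,\Bigm|\,x_k=x\Bigr],
\end{align*}
so that $V_0(x_0)$ matches the left-hand side of the theorem and the base case $V_{N-1}(x) = \mathbf{1}(L(x)\models\phi)$ agrees with the terminal $V$ prescribed in the statement.

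The inductive step is to establish the one-stage Bellman recursion $V_k = TV_{k+1}$. Since $\{w(k)\}$ is i.i.d., the state process is Markov under any pair of stationary policies, so the tower property of conditional expectation combined with pulling out the indicator at time $k$ (which depends only on $x$) yields
\begin{align*}
V_k(x) = \mathbf{1}(L(x)\models\phi)\sup_{\mu^{(d)}}\inf_{\mu^{(a)}}\int_X V_{k+1}(x')\,\mathbb{P}(dx'\mid x, u_d, u_a).
\end{align*}
The step that remains is to reduce the suprema/infima over full stationary strategies to pointwise suprema/infima over single-stage actions $u_d\in U_d$ and $u_a\in U_a$, which then matches the operator $T$ from (\ref{DPOperator}).

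The main obstacle I anticipate is the justification of this pointwise reduction together with the propagation of Borel-measurability of each $V_k$ through the recursion; this is where the continuity hypothesis on the kernel gets used. I would invoke a measurable-selector argument in the style of Theorem~1 of \cite{ding2013stochastic}: compactness of $U_d$ and $U_a$, boundedness of $V_{k+1}$ in $[0,1]$, and the assumed continuity of $dx_{k+1}$ in $(x,u_d,u_a)$ put the one-stage problem in the setting where Berge's maximum theorem supplies Borel-measurable selectors $\mu_k^{(d)}(x)$ and $\mu_k^{(a)}(x,u_d)$ that attain the Stackelberg value. This also shows that $TV_{k+1}$ is again Borel-measurable and $[0,1]$-valued, so the induction can continue. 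Iterating the recursion backward in time from the terminal stage and evaluating at $x_0$ produces $(T^N V)(x_0)$, the claimed identity. The only modification compared with \cite{ding2013stochastic} is the deterministic factor $\mathbf{1}(L(x)\models\phi)$ at each stage, which passes through the recursion without incident because it depends on the current state alone and therefore commutes with the $\sup\inf$.
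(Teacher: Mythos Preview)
Your proposal is correct and follows essentially the same backward-induction route as the paper, which likewise defines stage-$k$ cost-to-go functions, uses the Markov property to obtain the one-step recursion, and defers the measurable-selection and monotonicity technicalities to Theorem~1 of \cite{ding2013stochastic}. The only cosmetic difference is that the paper first writes the recursion for a fixed policy pair via the operator $T_{\mu^{(d)}_k,\mu^{(a)}_k}$ and then invokes monotonicity to pass to the $\sup\inf$, whereas you embed the $\sup\inf$ in $V_k$ from the outset; both orderings are standard and equivalent.
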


\begin{proof}
Consider a particular pair of stationary agent policies $\mu^{(d)}$ and $\mu^{(a)}$. 
For these policies, define measurable functions $V_k^{\mu^{(d)}, \mu^{(a)}}: X \rightarrow [0,1]$, $k=0,1,\dots,N-1$: 
\begin{align}
V_{N-1}^{\mu^{(d)}, \mu^{(a)}}(x_{N-1}) &:= \mathbf{1}(L(x_{N-1})\models \phi)\\
V_k^{\mu^{(d)}, \mu^{(a)}}(x_k)&:=\mathbb{E}_{\mu^{(d)}, \mu^{(a)}}\{\prod \limits_{i=k} \limits^{N-1} \mathbf{1}(L(x_{i}) \models \phi)|x(k)=x_k \}, k=0,1,\dots,N-2 \label{CostToGo}
\end{align}
Therefore, we have $\mathbb{P}_{\mu^{(d)}, \mu^{(a)}}^{x_0}\{L(\mathbf{x}_{\mathcal{N}}) \models \phi\}  = V_0^{\mu^{(d)}, \mu^{(a)}}(x_0)$. 

Now, consider strategies of the agents at a stage $k$. Define the operator $T_{\mu^{(d)}_k, \mu^{(a)}_k}$: 
\begin{align}
(T_{\mu^{(d)}_k, \mu^{(a)}_k}V)(x_k)&:=  \mathbf{1}(L(x_k) \models \phi) \int_X V(f(x_k,u_d,u_a,w))dx_{k+1}
\end{align}

Expanding Equation (\ref{CostToGo}) using the definition of the expectation operator will allow us to write $V_k^{\mu^{(d)}, \mu^{(a)}}(x) = (T_{\mu^{(d)}_{k+1}, \mu^{(a)}_{k+1}}V)(x)$. 

The result follows by an induction argument which uses the fact that $T_{\mu^{(d)}_k, \mu^{(a)}_k}$ is a monotonic operator. 
We refer to \cite{ding2013stochastic} for details. 
Further, this procedure also guarantees the existence of a defender policy that will maximize the probability of satisfaction of $\phi$ under any adversary policy. 
\qed
\end{proof}

\subsection{Secure Control Barrier Certificates}

\begin{definition}\label{S-CBCDef}
%A continuous function $B:X \rightarrow \mathbb{R}_{\geq 0}$ is a secure control barrier certificate (S-CBC) for the DDSG $\mathcal{G}$ if for any state $x \in X$, there exists $u_d \in U_d$ such that $\mathbb{E}_w[B(f(x,u_d,u_a,w)|x] \leq B(x) +c$ for all $u_a \in U_a$ and some constant $c \geq 0$. 
%
A continuous function $B:X \rightarrow \mathbb{R}_{\geq 0}$ is a secure control barrier certificate (S-CBC) for the DDSG $\mathcal{G}$ if for any state $x \in X$ and some constant $c \geq 0$, 
\begin{align}
\inf \limits_{u_d}~ \sup \limits_{u_a} ~\mathbb{E}_w[B(f(x,u_d,u_a,w)|x] \leq B(x) +c.
\end{align} 
\end{definition}

Intuitively, for some defender action $u_d$, the increase in the value of an S-CBC is bounded from above along trajectories of $\mathcal{G}$ under any adversary action $u_a$. 

\begin{remark}
S-CBCs generalize control barrier certificates and barrier certificates seen in prior work. 
If $f(x,u_d,u_{a_1},w) \sim f(x,u_d,u_{a_2},w)$ for every $u_{a_1},u_{a_2} \in U_a$, then we recover the definition of a control barrier certificate \cite{jagtap2019formal}. %while that 
The definition of a barrier certificate \cite{jagtap2018temporal}, \cite{prajna2007framework} is got by additionally requiring that $f(x,u_{d_1},u_{a_1},w) \sim f(x,u_{d_2},u_{a_2},w)$ for every $u_{d_1},u_{d_2} \in U_d$ and $u_{a_1},u_{a_2} \in U_a$. 
Here $\sim$ denotes stochastic equivalence of the respective stochastic processes \cite{pola2017bisimulation}. 
%setting $U_a = \emptyset$ will allow us to recover the definition of a control barrier certificate (for e.g., in \cite{jagtap2019formal}), while that of a barrier certificate can be recovered by also setting $U_d = \emptyset$ (for e.g., in \cite{jagtap2018temporal} or \cite{prajna2007framework}). 
In the latter case, when $c=0$, the function $B$ is a \emph{super-martingale}. 
For this case, along with some additional assumptions on the system dynamics, asymptotic guarantees on the satisfaction of properties over the infinite time-horizon can be established \cite{prajna2007framework}. 
\end{remark}

\begin{remark}
Although our definition of S-CBCs in Definition \ref{S-CBCDef} bears resemblance to the notion of a \emph{worst-case barrier certificate} introduced in  \cite{prajna2007framework}, there are some distinctions. 
While the entity in \cite{prajna2007framework} considers a dynamical system with a single disturbance input, our setting considers three terms that influence the evolution of the state of the system: we want to find a defender input that will allow the barrier function to satisfy a certain property under any adversary input and disturbance. 
A second point of difference is that while \cite{prajna2007framework} focuses on asymptotic analysis, we consider properties over a finite time horizon. 
\end{remark}

%The following result is taken from \cite{kushner1967stochastic} (Chapter III, Theorem 3 and Corollary 2-1), and 
We limit our attention to stationary strategies for both players. 
Studying the effects of other strategies is left as future work. 
The following preliminary result will be used subsequently to determine a bound on the probability of reaching a subset of states under particular agent policies over a finite time-horizon. 

\begin{lemma}\label{Kushner}
Consider a DDSG $\mathcal{G}$ and let $B:X \rightarrow \mathbb{R}_{\geq 0}$ be an S-CBC as in Definition \ref{S-CBCDef} with constant $c \geq 0$. 
Then, for any $\lambda > 0$ and initial state $x_0 \in X$, for a stationary defender policy, $\mu^{(d)} :X \rightarrow U_d$, the following holds under any stationary adversary policy $\mu^{(a)} :X \times U_d \rightarrow U_a$: 
\begin{align}
\inf \limits_{\mu^{(d)}}~ \sup \limits_{\mu^{(a)}} ~\mathbb{P}_{\mu^{(d)}, \mu^{(a)}}^{x_0}[\sup_{0 \leq k < N} B(x(k)) \geq \lambda] \leq \frac{B(x_0)+cN}{\lambda}
\end{align}
\end{lemma}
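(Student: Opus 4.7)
The plan is to reduce this to the classical Kushner/Doob supermartingale maximal inequality, with the added twist that the supermartingale property is only achieved under a minimax (leader-follower) selection of the defender's input. First I will unpack the S-CBC condition in Definition \ref{S-CBCDef}: since $U_d, U_a$ are compact Borel spaces, $B$ is continuous, and $f$ is Borel-measurable, a measurable selection argument yields a stationary defender policy $\mu^{(d)*}:X\to U_d$ that (at each state $x$) achieves the infimum in the S-CBC inequality. Under $\mu^{(d)*}$ and \emph{any} stationary adversary policy $\mu^{(a)}$, the one-step conditional expectation of $B$ satisfies
\begin{equation*}
\mathbb{E}_{\mu^{(d)*}, \mu^{(a)}}[B(x(k+1))\mid x(k)] \;\leq\; B(x(k)) + c.
\end{equation*}
This immediately shows that the shifted process $Y_k := B(x(k)) - ck$ is a nonnegative (up to the $-ck$ shift) supermartingale with respect to the natural filtration generated by the game.

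Next I would apply the optional stopping theorem. Define the stopping time
\begin{equation*}
\tau \;:=\; \min\{\,k\in\{0,1,\dots,N-1\} : B(x(k)) \geq \lambda\,\} \wedge (N-1),
\end{equation*}
which is bounded, hence admissible. Optional stopping applied to the supermartingale $Y_k$ gives $\mathbb{E}[Y_{\tau}] \leq Y_0 = B(x_0)$, which rearranges to
\begin{equation*}
\mathbb{E}_{\mu^{(d)*}, \mu^{(a)}}[B(x(\tau))] \;\leq\; B(x_0) + c\,\mathbb{E}[\tau] \;\leq\; B(x_0) + cN.
\end{equation*}
On the event $\{\sup_{0\leq k<N} B(x(k)) \geq \lambda\}$, the threshold is crossed before $N$ and $B(x(\tau)) \geq \lambda$ by construction, so a Markov-style inequality applied to the nonnegative random variable $B(x(\tau))$ yields
\begin{equation*}
\lambda\,\mathbb{P}_{\mu^{(d)*}, \mu^{(a)}}^{x_0}\!\left[\sup_{0\leq k<N} B(x(k)) \geq \lambda\right] \;\leq\; \mathbb{E}[B(x(\tau))] \;\leq\; B(x_0) + cN.
\end{equation*}
Dividing by $\lambda$ and taking the supremum over adversary policies gives a bound of $(B(x_0)+cN)/\lambda$ uniformly in $\mu^{(a)}$ for this particular $\mu^{(d)*}$, which is an upper bound on the infimum over defender policies — exactly the claim.

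The main obstacle is the first step: turning the pointwise inf-sup in the definition of an S-CBC into an \emph{actual} stationary defender policy that gives a one-step bound valid simultaneously for every adversary response. The Stackelberg framing from the preliminaries (defender commits first, adversary best-responds, deterministic equilibrium strategies suffice) is precisely what makes this work, but one still needs measurable selection to produce $\mu^{(d)*}$ as a Borel map so that $Y_k$ is a well-defined stochastic process on the probability space $(\Omega,\mathcal{F},\mathbb{P})$. Once that is in hand, optional stopping for a bounded stopping time and the Markov bound are routine, and the key conceptual point — that the $+c$ drift term inflates the maximal-inequality right-hand side by exactly $cN$ over a horizon of length $N$ — falls out mechanically.
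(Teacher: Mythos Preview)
Your proposal is correct and is essentially the same approach as the paper's: the paper simply defers to Kushner's maximal inequality (Chapter~III, Theorem~3 and Corollary~2-1 of \cite{kushner1967stochastic}) together with Definition~\ref{S-CBCDef} and stationarity, and your argument is precisely an explicit unpacking of that citation---extract a defender selector from the $\inf$--$\sup$ condition, observe that $B(x(k))-ck$ is a supermartingale under that selector for every adversary response, then apply optional stopping and Markov's inequality. The only place where you go slightly beyond the paper is in flagging the measurable-selection issue needed to produce $\mu^{(d)*}$ as a Borel map, which the paper absorbs into the phrase ``the agents adopt stationary policies.''
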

\begin{proof}
The proof follows from the result of Chapter III, Theorem 3 and Corollary 2-1 in \cite{kushner1967stochastic}, Definition \ref{S-CBCDef}, and the fact that the agents adopt stationary policies.
\qed
\end{proof}

\begin{definition}[$s-$Reachability]\label{safety}
For the DDSG $\mathcal{G}$ with dynamics in Equation (\ref{DDSGdynamics}), let $s \in [0,1]$ and $X_0 \subset X$ be the set of possible initial states and $X_1 \subset X$ be disjoint from $X_0$. Then, given $x_0 \in X_0$, $\mathcal{G}$ is $s-$reachable with respect to $X_1$, if $\sup\limits_{k \in \mathcal{N}}~\mathbb{P}^{x_0}[x_k \in X_1] \leq s$. 
That is, the probability of reaching a state in $X_1$ starting from $x_0 \in X_0$ in the time horizon $[0,N]$ is upper bounded by $s$. 
\end{definition}

\begin{theorem}\label{S-CBCThm}
With $X_0$ and $X_1$ known, and $X_0 \cap X_1 = \emptyset$, assume there exists an S-CBC $B: X \rightarrow \mathbb{R}_{\geq 0}$, stationary policies, $\mu^{(d)} :X \rightarrow U_d$ and $\mu^{(a)} :X \times U_d \rightarrow U_a$, and constant $c \geq 0$. Additionally, if there is a constant $\delta \in [0,1]$ such that: 
\begin{enumerate}
\item $B(x) \leq \delta$ for all $x \in X_0$,
\item $B(x) > 1$ for all $x \in X_1$, 
%\item there exists $u_d(k) \in U_d$ such that $B(f(x(k),u_d(k),u_a(k)))- B(x(k)) \leq 0$ for all $u_a(k) \in U_a$, for all $k \in \mathcal{N}$, 
\end{enumerate}
then the DDSG $\mathcal{G}$ starting from $x_0 \in X_0$ is $(\delta+cN)-$reachable with respect to $X_1$. %in Definition \ref{safety} holds for the DDSG in Equation (\ref{DDSGdynamics})
\end{theorem}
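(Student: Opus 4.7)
The plan is to reduce the $s$-reachability claim to an invocation of Lemma \ref{Kushner} with $\lambda = 1$, using the two sub-level/super-level set conditions on $B$ to calibrate the endpoints. Recall that by Definition \ref{safety} it suffices to exhibit the bound $\sup_{k \in \mathcal{N}} \mathbb{P}^{x_0}[x_k \in X_1] \leq \delta + cN$ under the policies $\mu^{(d)}, \mu^{(a)}$ fixed by the hypothesis.

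First I would observe that condition (2), $B(x) > 1$ for all $x \in X_1$, gives the pointwise set inclusion $\{x_k \in X_1\} \subseteq \{B(x_k) \geq 1\}$ for each $k \in \mathcal{N}$. Therefore for every $k \in \mathcal{N}$,
\begin{align*}
\mathbb{P}_{\mu^{(d)},\mu^{(a)}}^{x_0}[x_k \in X_1] \;\leq\; \mathbb{P}_{\mu^{(d)},\mu^{(a)}}^{x_0}[B(x_k) \geq 1] \;\leq\; \mathbb{P}_{\mu^{(d)},\mu^{(a)}}^{x_0}\Bigl[\sup_{0 \leq j < N} B(x_j) \geq 1\Bigr],
\end{align*}
where the second inequality is the elementary fact that a single coordinate event is contained in the event that the running supremum exceeds the same level. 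Since the right-hand side is independent of $k$, I can take the supremum over $k \in \mathcal{N}$ on the left-hand side without changing the bound.

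Next I would apply Lemma \ref{Kushner} with the choice $\lambda = 1$ to the running supremum appearing on the right-hand side. This yields
\begin{align*}
\mathbb{P}_{\mu^{(d)},\mu^{(a)}}^{x_0}\Bigl[\sup_{0 \leq j < N} B(x_j) \geq 1\Bigr] \;\leq\; B(x_0) + cN.
\end{align*}
Finally, condition (1) gives $B(x_0) \leq \delta$ since $x_0 \in X_0$, so chaining the inequalities produces $\sup_{k \in \mathcal{N}} \mathbb{P}^{x_0}[x_k \in X_1] \leq \delta + cN$, which is exactly $(\delta + cN)$-reachability with respect to $X_1$ in the sense of Definition \ref{safety}.

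I do not anticipate a serious obstacle; the proof is essentially a bookkeeping exercise that threads together the two level-set conditions on $B$ with the supermartingale-type tail estimate already proved in Lemma \ref{Kushner}. The only subtlety worth double-checking is the placement of the quantifiers over $\mu^{(d)}$ and $\mu^{(a)}$: Lemma \ref{Kushner} is stated as an inf-sup bound, so I should either inherit the hypothesis that $\mu^{(d)},\mu^{(a)}$ are the minimax-realizing stationary policies, or note that the same bound holds for any fixed stationary pair by monotonicity of the expectation in the S-CBC inequality of Definition \ref{S-CBCDef}. Either reading delivers the stated conclusion without additional work.
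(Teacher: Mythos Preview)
Your proposal is correct and follows essentially the same route as the paper: the paper also uses condition~(2) to obtain the inclusion $X_1 \subseteq \{x: B(x)\geq 1\}$, passes to the running-supremum event, invokes Lemma~\ref{Kushner} with $\lambda=1$ (inserting the $\inf_{\mu^{(d)}}\sup_{\mu^{(a)}}$ exactly as you anticipated in your closing remark), and then uses condition~(1) to replace $B(x_0)$ by $\delta$. Your treatment of the quantifier subtlety is, if anything, more careful than the paper's.
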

\begin{proof}
Observe that $X_1 \subseteq \{x \in X: B(x) \geq 1\}$. 
Therefore, starting from $x_0$, and following the respective agent policies, $\mathbb{P}_{\mu^{(d)}, \mu^{(a)}}^{x_0}[\exists k \in \mathcal{N}: x(k) \in X_1] \leq \mathbb{P}_{\mu^{(d)}, \mu^{(a)}}^{x_0}[B(x(k)) \geq 1]$. 
Since this should be true for arbitrary $k$, we have:
\begin{align*}
\sup_{k \in \mathcal{N}}\mathbb{P}^{x_0}[x_k \in X_1]& \leq \mathbb{P}_{\mu^{(d)}, \mu^{(a)}}^{x_0}\{\sup_{k \in \mathcal{N}} B(x(k)) \geq 1\} \leq \inf \limits_{\mu^{(d)}}~ \sup \limits_{\mu^{(a)}}~\mathbb{P}_{\mu^{(d)}, \mu^{(a)}}^{x_0}\{\sup_{k \in \mathcal{N}} B(x(k)) \geq 1\}\\
&\leq B(x_0)+cN \leq \delta+cN
\end{align*}
The second line of the above system of inequalities follows by setting $\lambda = 1$ in Lemma \ref{Kushner}, and the fact that $B(x) \leq \delta$ for all $x \in X_0$. 
\qed 
\end{proof}

\subsection{Automaton-Based Verification}

In order to verify that $\{L(\mathbf{x}_{\mathcal{N}}) \models \phi\}$ under agent policies $\mu^{(d)}$ and $\mu^{(a)}$, we need to establish that $(\eta_0, \eta_1,\dots,\eta_{N-1}) \subseteq \mathcal{L}(\mathcal{A}_{\phi})$. 
To do this, we first construct a DFA $\mathcal{A}_{\neg \phi}$, that accepts all and only those words over $\mathcal{AP}$ that do not satisfy the $safe-LTL_F$ formula $\phi$. 
We have the following result:
\begin{lemma} \cite{baier2008principles}
For $L(\mathbf{x}_{\mathcal{N}})=(\eta_0, \eta_1,\dots,\eta_{N-1})$ and a DFA $\mathcal{A}_{\phi}$, the following is true: 
\begin{align*}
(\eta_0, \eta_1,\dots,\eta_{N-1}) \subseteq \mathcal{L}(\mathcal{A}_{\phi}) \Leftrightarrow (\eta_0, \eta_1,\dots,\eta_{N-1}) \cap \mathcal{L}(\mathcal{A}_{\neg \phi}) = \emptyset
\end{align*} 
\end{lemma}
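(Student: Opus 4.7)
The plan is to recognize that this lemma is a routine consequence of two standard facts: (i) for any $LTL_F$ formula $\psi$, the DFA $\mathcal{A}_{\psi}$ is constructed so that $\mathcal{L}(\mathcal{A}_{\psi}) = \mathcal{L}(\psi)$, which is stated earlier in the excerpt via \cite{de2015synthesis}, and (ii) clause (3) of Definition \ref{LTLSemantics}, which says $w \models \neg \phi$ if and only if $w \not\models \phi$. The set-theoretic symbols in the statement are a mild abuse of notation: the tuple $w := (\eta_0, \eta_1, \dots, \eta_{N-1})$ should be read as a single finite word over $2^{\mathcal{AP}}$, ``$w \subseteq \mathcal{L}(\mathcal{A}_{\phi})$'' as $w \in \mathcal{L}(\mathcal{A}_{\phi})$, and ``$w \cap \mathcal{L}(\mathcal{A}_{\neg\phi}) = \emptyset$'' as $w \notin \mathcal{L}(\mathcal{A}_{\neg\phi})$. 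I would open the proof by making this reading explicit so that the two sides of the biconditional are genuinely propositions about the same object $w$.

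Given this reading, the proof reduces to chaining three equivalences. First, apply fact (i) with $\psi = \phi$ to get $w \in \mathcal{L}(\mathcal{A}_{\phi}) \Leftrightarrow w \models \phi$. Second, apply fact (i) with $\psi = \neg \phi$ to get $w \in \mathcal{L}(\mathcal{A}_{\neg\phi}) \Leftrightarrow w \models \neg \phi$. Third, invoke clause (3) of Definition \ref{LTLSemantics} to conclude $w \models \neg \phi \Leftrightarrow w \not\models \phi$. Composing the three yields $w \in \mathcal{L}(\mathcal{A}_{\phi}) \Leftrightarrow w \notin \mathcal{L}(\mathcal{A}_{\neg\phi})$, which is precisely the claim. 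Both directions follow by the same chain, so no separate argument for $(\Rightarrow)$ and $(\Leftarrow)$ is required.

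The only real subtlety, and the step I would flag most carefully, is that the paper restricts $\phi$ to the $safe\text{-}LTL_F$ fragment, while $\neg \phi$ in general does not belong to that fragment: negating a PNF formula built from $\mathbf{X}$ and $\mathbf{G}$ introduces $\mathbf{F}$'s in its PNF form. This is benign, because the construction invoked here is the general $LTL_F$-to-DFA translation of \cite{de2015synthesis}, which applies to any syntactically valid $LTL_F$ formula; $\neg \phi$ is such a formula, so $\mathcal{A}_{\neg\phi}$ and $\mathcal{L}(\mathcal{A}_{\neg\phi})$ are well-defined. Beyond this bookkeeping about the fragment, there is no technical obstacle: the lemma is essentially the law of excluded middle at the semantic level, transported across the DFA-to-formula correspondence.
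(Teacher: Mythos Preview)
Your proof is correct, and in fact the paper does not give any proof of this lemma at all: it simply cites the result from \cite{baier2008principles} and moves on. Your argument---reading the set-theoretic notation as membership and non-membership of the single word $w=(\eta_0,\dots,\eta_{N-1})$, then chaining $w\in\mathcal{L}(\mathcal{A}_\phi)\Leftrightarrow w\models\phi\Leftrightarrow w\not\models\neg\phi\Leftrightarrow w\notin\mathcal{L}(\mathcal{A}_{\neg\phi})$ via the $LTL_F$-to-DFA correspondence and the negation clause of Definition~\ref{LTLSemantics}---is exactly the standard justification one would expect for this fact, and your remark that $\neg\phi$ need not lie in $safe\text{-}LTL_F$ but is still a valid $LTL_F$ formula (so the DFA construction of \cite{de2015synthesis} applies) is a useful observation that the paper leaves implicit.
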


The construction of $\mathcal{A}_{\neg \phi}$ can also be carried out in Rabinizer4 \cite{kvretinsky2018rabinizer}. 
The accepting runs of $\mathcal{A}_{\neg \phi}$ of length less than or equal to $N$ can be computed using a depth-first search algorithm \cite{tarjan1972depth}. 
For the purposes of this section, it is important to understand that the accepting runs of $\mathcal{A}_{\neg \phi}$ of length less than or equal to $N$ will give a bound on the probability that a particular pair of agent policies $(\mu^{(d)},\mu^{(a)})$ will not satisfy $\phi$ over the time horizon $\mathcal{N}$. 
Using Definition \ref{AccRun} and following the treatment of \cite{jagtap2018temporal} and \cite{jagtap2019formal} define the following terms (the reader is also referred to these works for an example that offers a detailed treatment of the procedure): 
\begin{align}
\mathcal{R}_N(\mathcal{A}_{\neg \phi})&:=\{\mathbf{q} = (q_0,\dots,q_n) \in \mathcal{L}(\mathcal{A}_{\neg \phi}): n \leq N, q_i \neq q_{i+1} \forall i<n\}\label{AccRunNoLoop}\\
\mathcal{R}_N^a(\mathcal{A}_{\neg \phi})&:=\{\mathbf{q} = (q_0,\dots,q_n) \in \mathcal{R}_N(\mathcal{A}_{\neg \phi}): a \in \mathcal{AP} \text{ and } q_0 \xrightarrow{a} q_1\}\label{AccRunEachAP}\\
\mathcal{P}^a(\mathbf{q})&:=\begin{cases}\{(q_i,q_{i+1},q_{i+2}, T(\mathbf{q},q_{i+1})): 0 \leq i \leq n-2\}&\mathbf{q} \in \mathcal{R}_N^a(\mathcal{A}_{\neg \phi}), |\mathbf{q}|>2\\ \emptyset & otherwise \end{cases}\label{Len3RunHoriz}\\
T(\mathbf{q},q_{i+1})&:=\begin{cases} N+2-|\mathbf{q}|&\exists a \in \mathcal{AP}: q_{i+1} \xrightarrow{a} q_{i+1} \\1&otherwise \end{cases}\label{Horiz}
%\mathcal{P}(\mathcal{A}_{\neg \phi})&:=\bigcup_{a \in \mathcal{AP}} \bigcup_{\mathbf{q} \in \mathcal{R}_N^a(\mathcal{A}_{\neg \phi})} \mathcal{P}%^a(\mathbf{q})\label{Overall}
\end{align}

Intuitively, $\mathcal{R}_N(\mathcal{A}_{\neg \phi})$ is the set of accepting runs in $\mathcal{A}_{\neg \phi}$ of length not greater than $N$, and without counting any self-loops in the states of the DFA. 
The set $\mathcal{R}_N^a(\mathcal{A}_{\neg \phi})$ is the set of runs in $\mathcal{R}_N(\mathcal{A}_{\neg \phi})$ with the first state transition labeled by $a \in \mathcal{AP}$. 
For an element of $\mathcal{R}_N^a(\mathcal{A}_{\neg \phi})$, $\mathcal{P}^a(\mathbf{q})$ defines the set of paths of length $3$ augmented with a `loop-bound'. 
The `loop-bound' $T(\mathbf{q},q_{i+1})$ is an indicator of the number of `self-loops' the run in the DFA can make at state $q_{i+1}$ while still keeping its length less than or equal to $N$. 
We assume that $T(\mathbf{q},q_{i+1})=1$ when the run cannot make a self-loop at $q_{i+1}$. 
%
%This will allow us to determine a bound $\alpha_{not}$ such that $\mathbb{P}_{\mu^{(d)}, \mu^{(a)}}^{x_0}\{L(\mathbf{x}_{\mathcal{N}})  \models \neg \phi\} \leq \alpha_{not}$ for a particular defender policy $\mu_d$ and adversary policy $\mu_a$. 
%We will use the bound $\alpha_{not}$ to then determine the upper bound $\alpha$ in Problem \ref{Problem}. 
%This is detailed in the sequel. 
%
%\textbf{Tasks: 1. Policy satisfaction prob (Done- THM 2); 2. Relation to best response of adv; 3. Equilibrium of game; 4. Policy synthesis; \\ Can try to do first two in this paper, next two for later?}

\subsection{Satisfaction probability using S-CBCs and $\mathcal{A}_{\neg \phi}$}

In this section, we show that an accepting run of $\mathcal{A}_{\neg \phi}$ of length less than or equal to $N$ gives a lower bound on the probability that a particular pair of agent policies will not satisfy the $safe-LTL_F$ formula $\phi$. 
We use this in conjunction with the S-CBC to derive an upper bound on the probability that $\phi$ will be satisfied for a particular choice of defender policy under any adversary policy. 
Specifically, we use Theorem \ref{S-CBCThm} over each accepting run of $\mathcal{A}_{\neg \phi}$ of length less than or equal to $N$ to give a bound on the overall satisfaction probability.
%The following result generalizes Theorem 5.2 of \cite{jagtap2019formal} to account for the presence of an intelligent adversary who aims to minimize the probability of satisfaction of the temporal logic specification over the finite time horizon. 

\begin{theorem}\label{SatProb}
Assume that the DDSG $\mathcal{G}$ has to satisfy a $safe-LTL_F$ formula $\phi$ over horizon $\mathcal{N}$. 
Let $\mathcal{A}_{\neg \phi}$ be the DFA corresponding to the negation of $\phi$, and for this DFA, assume that the quantities in Equations (\ref{AccRunNoLoop})-(\ref{Len3RunHoriz}) have been computed. 
%Then, for a defender policy $\mu^{(d)}$, $\mathbb{P}_{\mu^{(d)}, \mu^{(a)}}^{x_0}\{L(\mathbf{x}_{\mathcal{N}}) \models \phi\} \geq ssssSSS$ under any adversary policy $\mu^{(a)}$ for all $x_0 \in L^{-1}(a_j)$ for some $a_j \in \mathcal{AP}$.
Then, for some $a_j \in \mathcal{AP}$ and all $x_0 \in L^{-1}(a_j)$ the maximum value of the probability of satisfaction of $\phi$ for a defender policy $\mu^{(d)}$ under any adversary policy $\mu^{(a)}$ satisfies the following inequality:
\begin{align*}
\sup_{\mu^{(d)}}~ \inf_{\mu^{(a)}}~ 
\mathbb{P}_{\mu^{(d)}, \mu^{(a)}}^{x_0}\{L(\mathbf{x}_{\mathcal{N}}) \models \phi\} \geq 1-\sum_{\mathbf{q} \in \mathcal{R}_N^{a_j}(\mathcal{A}_{\neg \phi})}~\prod_{\rho \in \mathcal{P}^{a_j}(\mathbf{q})} ~(\delta_{\rho}+c_{\rho}T),
\end{align*}
where $\rho = (q,q',q'',T) \in \mathcal{P}^{a_j}(\mathbf{q})$ is the set of paths of length $3$ with loop bound $T$ for $a_j \in \mathcal{AP}$ in an accepting run of length $N$ in $\mathcal{A}_{\neg \phi}$. 
\end{theorem}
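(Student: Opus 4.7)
The plan is to translate the satisfaction event into a violation event, enumerate violating trajectories via accepting runs of $\mathcal{A}_{\neg \phi}$, and then apply Theorem \ref{S-CBCThm} segment-by-segment. First I would note that, by the lemma preceding Equation (\ref{AccRunNoLoop}), $\mathbb{P}^{x_0}_{\mu^{(d)},\mu^{(a)}}\{L(\mathbf{x}_{\mathcal{N}}) \models \phi\} = 1 - \mathbb{P}^{x_0}_{\mu^{(d)},\mu^{(a)}}\{L(\mathbf{x}_{\mathcal{N}}) \in \mathcal{L}(\mathcal{A}_{\neg \phi})\}$, so it suffices to upper-bound the violation probability. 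Because $x_0 \in L^{-1}(a_j)$, Assumption \ref{labelfunc} forces the first transition of any consumed run to carry label $a_j$, and the self-loop-free runs in $\mathcal{R}_N^{a_j}(\mathcal{A}_{\neg \phi})$ capture every distinct way the trace can be accepted when we collapse self-loops.

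Next I would apply a union bound over $\mathbf{q} \in \mathcal{R}_N^{a_j}(\mathcal{A}_{\neg \phi})$:
\begin{align*}
\mathbb{P}^{x_0}_{\mu^{(d)},\mu^{(a)}}\{L(\mathbf{x}_{\mathcal{N}}) \in \mathcal{L}(\mathcal{A}_{\neg \phi})\} \leq \sum_{\mathbf{q} \in \mathcal{R}_N^{a_j}(\mathcal{A}_{\neg \phi})} \mathbb{P}^{x_0}_{\mu^{(d)},\mu^{(a)}}\{\mathbf{x}_{\mathcal{N}} \text{ induces } \mathbf{q}\}.
\end{align*}
For a fixed run $\mathbf{q} = q_0 q_1 \dots q_n$, the event that the trajectory induces $\mathbf{q}$ requires that, for every $i$, the state first leaves the label region associated with edge $(q_i, q_{i+1})$ only to enter the region associated with edge $(q_{i+1}, q_{i+2})$, within the budget $T(\mathbf{q}, q_{i+1})$ allotted by the loop bound so that the total collapsed run length stays below $N$. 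By Assumption \ref{labelfunc} the label regions are pairwise disjoint, so each segment is an $s$-reachability problem (Definition \ref{safety}) with $X_0$ the source label region and $X_1$ the target label region.

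Using the Markov property that stationary policies induce on the joint state/DFA trajectory, the probability of inducing $\mathbf{q}$ factors as a product of such segment reachability probabilities, indexed exactly by $\rho = (q_i, q_{i+1}, q_{i+2}, T) \in \mathcal{P}^{a_j}(\mathbf{q})$. Applying Theorem \ref{S-CBCThm} to each segment with an S-CBC $B_\rho$ satisfying $B_\rho(x) \leq \delta_\rho$ on $X_0$ and $B_\rho(x) > 1$ on $X_1$, and constant $c_\rho$, gives the per-segment bound $\delta_\rho + c_\rho T$. Substituting this back yields
\begin{align*}
\mathbb{P}^{x_0}_{\mu^{(d)},\mu^{(a)}}\{L(\mathbf{x}_{\mathcal{N}}) \in \mathcal{L}(\mathcal{A}_{\neg \phi})\} \leq \sum_{\mathbf{q} \in \mathcal{R}_N^{a_j}(\mathcal{A}_{\neg \phi})} \prod_{\rho \in \mathcal{P}^{a_j}(\mathbf{q})} (\delta_\rho + c_\rho T),
\end{align*}
and rearranging gives the claimed lower bound on the $\sup\inf$ satisfaction probability.

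The main obstacle I expect is reconciling the $\sup_{\mu^{(d)}}\inf_{\mu^{(a)}}$ on the left-hand side with the per-segment $\inf_{\mu^{(d)}}\sup_{\mu^{(a)}}$ structure invoked by Theorem \ref{S-CBCThm}: each S-CBC $B_\rho$ comes with its own locally optimal defender policy, and these need to be composed into a single stationary defender policy on $X$ that uniformly dominates every adversary response on the entire horizon $\mathcal{N}$. The natural remedy, as in \cite{ding2013stochastic} and \cite{jagtap2019formal}, is to lift the defender policy to the product space $X \times Q$ of system state and DFA state, where the active S-CBC is unambiguously determined by the current DFA state; Assumption \ref{labelfunc} ensures that the DFA state evolves deterministically given $\mathbf{x}_{\mathcal{N}}$, so this lifted policy projects back to a well-defined stationary policy on $X$. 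A secondary technical point is the rigorous factorization of the run-induction probability into segment probabilities, which follows from repeatedly conditioning on the first entry time into each successive label region and using the strong Markov property of the closed-loop system.
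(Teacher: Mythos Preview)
Your proposal is correct and follows essentially the same route as the paper: complement to $\neg\phi$, union-bound over accepting runs $\mathbf{q}\in\mathcal{R}_N^{a_j}(\mathcal{A}_{\neg\phi})$, factor each run into length-$3$ segments $\rho\in\mathcal{P}^{a_j}(\mathbf{q})$, and bound each segment's reachability probability by $\delta_\rho+c_\rho T$ via Theorem~\ref{S-CBCThm}. The one presentational difference is that the paper handles the $\sup\inf$ versus $\inf\sup$ issue by the one-line algebraic identity $\sup_{\mu^{(d)}}\inf_{\mu^{(a)}}\mathbb{P}\{\cdots\models\phi\}=1-\inf_{\mu^{(d)}}\sup_{\mu^{(a)}}\mathbb{P}\{\cdots\models\neg\phi\}$ and does not explicitly address the policy-gluing question you raise; your product-space lifting to $X\times Q$ is a more careful justification of a point the paper leaves implicit.
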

\begin{proof}
For $a_j \in \mathcal{AP}$, consider $\mathbf{q} \in \mathcal{R}_N^{a_j}(\mathcal{A}_{\neg \phi})$ (Equation (\ref{AccRunEachAP})) and the set $\mathcal{P}^{a_j}(\mathbf{q})$ (Equations (\ref{Len3RunHoriz}) and (\ref{Horiz})). 
Consider an element $\rho = (q,q',q'',T) \in \mathcal{P}^{a_j}(\mathbf{q})$. 
From Theorem \ref{S-CBCThm}, for some stationary defender policy $\mu^{(d)}$, the probability that a trajectory of $\mathcal{G}$ starting from $x_0 \in L^{-1}(\sigma: q \xrightarrow{\sigma} q')$ and reaching $x_1 \in L^{-1}(\sigma: q' \xrightarrow{\sigma} q'')$ under stationary adversary policy $\mu^{(a)}$ over the time horizon $T$ is at most $\delta_{\rho}+c_{\rho}T$. 
Therefore, the probability of an accepting run in $\mathcal{A}_{\neg \phi}$ of length at most $N$ starting from $x_0 \in L^{-1}(a_j)$ is upper bounded by:
\begin{align*}
\inf_{\mu^{(d)}}~ \sup_{\mu^{(a)}}~ 
 \mathbb{P}_{\mu^{(d)}, \mu^{(a)}}^{x_0}\{L(\mathbf{x}_{\mathcal{N}}) \models \neg \phi\} \leq \sum_{\mathbf{q} \in \mathcal{R}_N^{a_j}(\mathcal{A}_{\neg \phi})}~\prod_{\rho \in \mathcal{P}^{a_j}(\mathbf{q})} ~(\delta_{\rho}+c_{\rho}T)
\end{align*}

Now consider Equation (\ref{ProbSat}) of Problem \ref{Problem}. 
We have the following set of equivalences and inequalities: 
\begin{align*}
&\sup_{\mu^{(d)}}~ \inf_{\mu^{(a)}}~  \mathbb{P}_{\mu^{(d)}, \mu^{(a)}}^{x_0}\{L(\mathbf{x}_{\mathcal{N}}) \models \phi\} = 
\sup_{\mu^{(d)}}~(-\sup_{\mu^{(a)}}~(-\mathbb{P}_{\mu^{(d)}, \mu^{(a)}}^{x_0}\{L(\mathbf{x}_{\mathcal{N}}) \models \phi\}))\\
=&-\inf_{\mu^{(d)}}~\sup_{\mu^{(a)}}~(-\mathbb{P}_{\mu^{(d)}, \mu^{(a)}}^{x_0}\{L(\mathbf{x}_{\mathcal{N}}) \models \phi\})
=-\inf_{\mu^{(d)}}~\sup_{\mu^{(a)}}~(-1+\mathbb{P}_{\mu^{(d)}, \mu^{(a)}}^{x_0}\{L(\mathbf{x}_{\mathcal{N}}) \models \neg \phi\})\\
%\end{align*}
%\begin{align*}
&\geq 1-\inf_{\mu^{(d)}}~\sup_{\mu^{(a)}}~\mathbb{P}_{\mu^{(d)}, \mu^{(a)}}^{x_0}\{L(\mathbf{x}_{\mathcal{N}}) \models \neg \phi\}
\geq 1-\sum_{\mathbf{q} \in \mathcal{R}_N^{a_j}(\mathcal{A}_{\neg \phi})}~\prod_{\rho \in \mathcal{P}^{a_j}(\mathbf{q})} ~(\delta_{\rho}+c_{\rho}T)
\end{align*}
%
%The probability of satisfying $\phi$ in the time horizon $\mathcal{N}$ following these policies is then $ \mathbb{P}_{\mu^{(d)}, \mu^{(a)}}^{x_0}\{L(\mathbf{x}_{\mathcal{N}}) \models \phi\} \geq 1-  \mathbb{P}_{\mu^{(d)}, \mu^{(a)}}^{x_0}\{L(\mathbf{x}_{\mathcal{N}}) \models \neg \phi\}$, and the result follows. 
%
%Now, let $\mu^{(d)}_*$ and $\mu^{(a)}_*$ be the agent policies that achieve $\sup \limits_{\mu^{(d)}}~ \inf \limits_{\mu^{(a)}}~ \mathbb{P}_{\mu^{(d)}, \mu^{(a)}}^{x_0}\{L(\mathbf{x}_{\mathcal{N}}) \models \neg \phi\}$. 
%Then, $\mathbb{P}_{{\mu^{(d)}_*}, \mu^{(a)}_*}^{x_0}\{L(\mathbf{x}_{\mathcal{N}}) \models \neg \phi\} \leq \mathbb{P}_{\mu^{(d)}, \mu^{(a)}}^{x_0}\{L(\mathbf{x}_{\mathcal{N}}) \models \neg \phi\}$. 
%
%The result follows from the fact that: 
%\begin{align*}
% \mathbb{P}_{{\mu^{(d)}_*}, \mu^{(a)}_*}^{x_0}\{L(\mathbf{x}_{\mathcal{N}}) \models \phi\} \geq 1- \mathbb{P}_{\mu^{(d)}_*, \mu^{(a)}_*}^{x_0}\{L(\mathbf{x}_{\mathcal{N}}) \models \neg \phi\}% \geq 1- \sum_{\mathbf{q} \in \mathcal{R}_N^{a_j}(\mathcal{A}_{\neg \phi})}~\prod_{\rho \in \mathcal{P}^{a_j}(\mathbf{q})} ~(\delta_{\rho}+c_{\rho}T)
%\end{align*}
\qed
\end{proof}
Theorem \ref{SatProb} generalizes Theorem 5.2 of \cite{jagtap2019formal} to provide a lower bound for a stationary defender policy that maximizes the probability that the $safe-LTL_F$ formula is satisfied by the DDSG $\mathcal{G}$ over the time horizon $\mathcal{N}$, starting from $x_0 \in L^{-1}(a_j)$ for some $a_j \in \mathcal{AP}$ for any stationary adversary policy. 
%
%\begin{corollary}
%A lower bound for a stationary defender policy that maximizes the probability that the $safe-LTL_F$ formula is satisfied by the DDSG $\mathcal{G}$ over the time horizon $\mathcal{N}$, starting from any $x_0 \in X$ is given by: 
%\begin{align}
%\sup_{\mu^{(d)}}~ \inf_{\mu^{(a)}}~ 
%\mathbb{P}_{\mu^{(d)}, \mu^{(a)}}^{x_0}\{L(\mathbf{x}_{\mathcal{N}}) \models \phi\} \geq 1- \sum_{a_j \in \mathcal{AP}} \sum_{\mathbf{q} \in \mathcal{R}_N^{a_j}(\mathcal{A}_{\neg \phi})}~\prod_{\rho \in \mathcal{P}^{a_j}(\mathbf{q})} ~(\delta_{\rho}+c_{\rho}T).
%\end{align}
%\end{corollary}
%\begin{proof}
%This result follows from Equation \ref{Overall} and Theorem \ref{SatProb}. 
%\end{proof}
%
\subsection{Computing an S-CBC}

The use of barrier functions will circumvent the need to explicitly compute sets of reachable states, which is known to be undecidable for general dynamical systems \cite{lafferriere2001symbolic}. 
%It will also allow for the analysis of general nonlinear and stochastic dynamical systems. 
However, computationally efficient methods can be used to construct a barrier certificate if the system dynamics can be expressed as a polynomial \cite{prajna2007framework}. 
This will allow for determining bounds on the probability of satisfaction of the LTL formula without discretizing the state space. 
In contrast, if the underlying state space is continuous, computing the satisfaction probability and the corresponding agent policy using dynamic programming will necessitate a discretization of the state space in order to approximate the integral in Equation (\ref{DPOperator}).

We propose a sum-of-squares (SOS) optimization \cite{parrilo2003semidefinite} based approach that will allow us to compute an S-CBC  if the evolution of the state of the DDSG has a specific structure. 
The key insight is that if a function can be written as a sum of squares of different polynomials, then it is non-negative. 

\begin{assumption1}\label{SOS}
The sets $X, U_d, U_a$ in the DDSG $\mathcal{G}$ are continuous, and $f(x,u_d,u_a,w)$ in Equation (\ref{DDSGdynamics}) can be written as a polynomial in $x, u_d, u_a$ for any $w$. 
Further, the sets $X_i = L^{-1}(a_i)$ in Assumption \ref{labelfunc} can be represented by polynomial inequalities. 
\end{assumption1}

\begin{proposition}\label{CBCSSOS}
Under the conditions of Assumption \ref{SOS}, suppose that sets $X_0:=\{x \in X: g_0(x) \geq 0\}$, $X_1:=\{x \in X: g_1(x) \geq 0\}$, and $X:=\{x \in X: g(x) \geq 0\}$, where the inequalities are element-wise. 
Assume that there is an SOS polynomial $B(x)$, constants $\delta \in [0,1]$ and $c$, SOS (vector) polynomials $s_0(x), s_1(x)$, and $s(x)$, and polynomials $s^d_{u_i}(x)$ corresponding to the $i^{th}$ entry in $u_d$, such that: 
\begin{align}
&-B(x)-s_0^\intercal (x) g_0(x) + \delta \\
&B(x) -s_1^ \intercal (x)g_1(x) -1 \\
\forall u_a \in U_a: &-\mathbb{E}_w[B(f(x,u_d,u_a,w)|x] + B(x) -\sum_i (u_{d_i}-s^d_{u_i}(x)) - s^ \intercal (x) g(x) +c
\end{align}
are all SOS polynomials. Then, $B(x)$ satisfies the conditions of Theorem \ref{S-CBCThm}, and $u_{d_i}=s^d_{u_i}(x)$ is the corresponding defender policy. 
\end{proposition}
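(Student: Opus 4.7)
The plan is to verify the three hypotheses of Theorem \ref{S-CBCThm} by exploiting the defining property of SOS polynomials: any SOS polynomial is pointwise nonnegative. In this setting the vector multipliers $s_0$, $s_1$, $s$ play the role of Positivstellensatz-style certificates that translate membership in the semialgebraic sets $X_0$, $X_1$, $X$ (described by $g_0, g_1, g \geq 0$ componentwise) into algebraic inequalities valid on those sets.

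First I would dispatch the two boundary conditions, which are essentially a one-line computation each. The first SOS hypothesis gives $-B(x) - s_0^\intercal(x) g_0(x) + \delta \geq 0$ for every $x$. For $x \in X_0$ we have $g_0(x) \geq 0$ componentwise (by definition of $X_0$) and $s_0(x) \geq 0$ componentwise (since each entry is SOS), so $s_0^\intercal(x) g_0(x) \geq 0$ and we conclude $B(x) \leq \delta$ on $X_0$, matching condition (1) of Theorem \ref{S-CBCThm}. The identical argument applied to the second SOS polynomial yields $B(x) \geq 1$ on $X_1$; the strict inequality in Theorem \ref{S-CBCThm} is standard to recover by an arbitrarily small positive perturbation of $B$.

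Next I would establish the supermartingale-type S-CBC inequality of Definition \ref{S-CBCDef}. Fix any $u_a \in U_a$ and any $x \in X$. The third SOS hypothesis says that, regarded as a polynomial in $(x, u_d)$, the given expression is nonnegative everywhere. The key algebraic device is to substitute the candidate defender action $u_{d_i} = s^d_{u_i}(x)$, which makes the term $\sum_i (u_{d_i} - s^d_{u_i}(x))$ vanish identically. Simultaneously, on $X$ we have $g(x) \geq 0$ componentwise and $s$ is SOS, so $s^\intercal(x) g(x) \geq 0$. The surviving inequality is $B(x) + c - \mathbb{E}_w[B(f(x, s^d_u(x), u_a, w)) \mid x] \geq 0$, valid for every $u_a \in U_a$. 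Taking $\sup_{u_a}$ preserves the bound, and the outer $\inf_{u_d}$ in Definition \ref{S-CBCDef} is bounded above by the value at our specific choice $u_d = s^d_u(x)$, giving the S-CBC property with this policy.

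The main obstacle, both conceptually and in verifying correctness, is the third condition. The term $\sum_i (u_{d_i} - s^d_{u_i}(x))$ is an unusual multiplier, since $u_{d_i} - s^d_{u_i}(x)$ is not constrained to be nonnegative; it functions as a Positivstellensatz \emph{equality}-constraint multiplier whose role is precisely to pin the extracted defender policy to $u^*_d = s^d_u(x)$ by making the substitution collapse the term to zero. Also delicate is the meaning of the ``$\forall u_a \in U_a$'' qualifier: the argument above must go through pointwise in $u_a$, either interpreted as a family of SOS checks parameterized by $u_a$ or lifted to a single SOS program with additional multipliers encoding the semialgebraic description of $U_a$. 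Once these two points are correctly formulated, the three-step verification above yields the claim immediately.
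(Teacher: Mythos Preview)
Your proposal is correct and is precisely the standard S-procedure/Positivstellensatz argument the paper has in mind; the paper itself does not spell out a proof but simply refers to Lemma~7 of \cite{wongpiromsarn2015automata} and Lemma~5.6 of \cite{jagtap2019formal}, whose content is exactly the three-step verification you outline. Your remarks about the equality-multiplier role of $\sum_i(u_{d_i}-s^d_{u_i}(x))$ and about the pointwise-in-$u_a$ interpretation of the third condition are accurate and in fact sharper than what the paper states.
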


\begin{proof}
The proof of this result follows in a manner similar to Lemma 7 in \cite{wongpiromsarn2015automata} and Lemma 5.6 in \cite{jagtap2019formal}, and we do not present it here. \qed
\end{proof}

The authors of \cite{jagtap2019formal} discuss an alternative approach in the case when the input set has finite cardinality. 
A similar treatment is beyond the scope of the present paper, and will be an interesting future direction of research. 
%
%\textbf{Read about Counterexample Guided Inductive Synthesis \cite{jagtap2019formal}.\\
%Think about how to write defn of S-CBC as sum-of-squares polynomials.}
%
%\begin{assumption}\label{FinInput}
%The DDSG $\mathcal{G}$ has a compact state set $X$, and the input sets in Equation (\ref{DDSGdynamics}) are finite- that is, $U_d = \{u_{d_1}, \dots, u_{d_v}\}$ and $U_a = \{u_{a_1}, \dots , u_{a_w}\}$. 
%Further, the sets $X_i = L^{-1}(a_i)$, $a_i \in \mathcal{AP}$, in Assumption \ref{labelfunc} can be represented by polynomial inequalities. 
%\end{assumption}
%
%\begin{proposition}\label{CBSFinInp}
%Under the conditions of Assumption \ref{FinInput}, suppose that sets $X_0, X_1,$ and $X$ can be represented as polynomial inequalities. 
%Further, assume that there exists a function $B(x)$, and constants $\delta \in [0,1]$ and $c \geq 0$ such that: 
%\begin{enumerate}
%\item $B(x) \geq 0$ for all $x \in X$ AND
%\item $B(x) \leq \delta$ for all $x \in X_0$ AND 
%\item $B(x) \geq 1$ for all $x \in X_1$ AND 
%\item $\bigvee \limits_{u_d \in U_d} \bigwedge \limits_{u_a \in U_a} \mathbb{E}[f(x,u_a,u_a,w)] \leq B(x) + c$ for all $x \in X$ (here, $\bigvee$ and $\bigwedge$ respectively denote the logical OR and logical AND operations).
%\end{enumerate}
%Then, $B(x)$ satisfies the conditions of Theorem \ref{S-CBCThm}, and $u_d \in U_d$ satisfying Statement 4. above is a valid defender policy. 
%\end{proposition}

\section{Example}\label{Example}

We present an example demonstrating our solution approach to Problem \ref{Problem}. 

\begin{example}
Let the dynamics of the DDSG $\mathcal{G}$ with $X=W=\mathbb{R}^2$, $U_d$ is a compact subset of $\mathbb{R}$, $U_a=[-1,1]$, and $w_1(k), w_2(k) \sim Unif[-1,1]$ (and i.i.d.) be given by: 
\begin{align}
x_1(k+1)&=-0.5x_1(k)x_2(k)+w_1(k)\\
x_2(k+1)&=x_1(k)x_2(k)+0.1x_2^2(k)+u_d(k)+0.6u_a(k)+w_2(k) \label{egEqn}
\end{align}
Let $\mathcal{AP} = \{a_0,a_1,a_2,a_3,a_4\}$, and sets $X_0, X_1, X_2, X_3,X_4$ such that for $x \in X_i$, $L(x) = a_i$. 
The sets $X_i$ are defined by: 
\begin{align*}
X_0&:=\{(x_1,x_2): x_1^2+x_2^2 \leq 0.9\},\\
X_1&:=\{(x_1,x_2): (2 \leq x_1 \leq 6) \wedge (-2 \leq x_2 \leq 2)\},\\
X_2&:=\{(x_1,x_2): x_1^2+(x_2-10)^2 \leq 4\},\\
X_3&:=\{(x_1,x_2): (-10 \leq x_1 \leq -3) \wedge (-4 \leq x_2 \leq -2)\},\\
X_4&:= X \setminus \bigcup_i X_i.
\end{align*}

The aim for an agent is to determine a sequence of inputs $\{u_d\}$ such that starting from $X_0$, for any sequence of adversary inputs $\{u_a\}$, it avoids obstacles in its environment, defined by the sets $X_1, X_2,$ and $X_3$ for $10$ units of time. 
The corresponding $safe-LTL_F$ formula is $\phi = [a_0 \wedge \mathbf{G} \neg (a_1 \vee a_2 \vee a_3)]$. 
The DFA that accepts $\neg \phi$ is shown in Figure \ref{DFA}. 
Suppose we are interested in determining a bound on the probability of $\phi$ being satisfied for a time-horizon of length $10$. Using Equations (\ref{AccRunNoLoop}) - (\ref{Horiz}), we have $\mathcal{P}^{a_0}(q_0, q_1, q_2) = \{(q_0,q_1,q_2,9)\}$, and $\mathcal{P}^{a_j} = \emptyset$ for $j=1,2,3,4$. 
\begin{figure}
\centering
\includegraphics[width = 3.4 in]{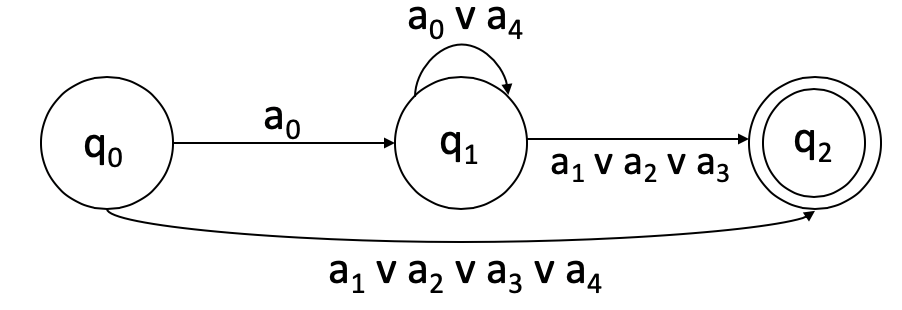}
\caption{\label{DFA}The DFA that accepts $\neg \phi$ for the $safe-LTL_F$ formula $\phi = [a_0 \wedge \mathbf{G} \neg (a_1 \vee a_2 \vee a_3)]$ and $\mathcal{AP} = \{a_0,a_1,a_2,a_3,a_4\}$.}
\end{figure}

We use a sum-of-squares optimization toolbox, SOSTOOLS \cite{prajna2002introducing} along with SDPT3 \cite{toh1999sdpt3}, a semidefinite program solver. 
The barrier function $B(x)=B(x_1,x_2)$ was assumed to be a polynomial of degree-two. 
For the case $c = 0$, we determine the smallest value of $\delta$ that will satisfy the conditions in Proposition \ref{CBCSSOS} to compute an S-CBC. 
The output of the program was an S-CBC given by
\begin{align*}
B(x)&=0.1915x_1^2 + 0.1868x_1x_2 -0.144x_1 + 0.1201x_2^2 + 0.1239x_2+0.16
\end{align*}

The environment and the obstacles denoted by the sets $X_1, X_2, X_3$ and the contours of the S-CBC is shown in Figure \ref{Regions}. 
We observe that $B(x)$ is less than $1$ in some part of $X_1$. 
A possible reason is that when solving for the second condition in Proposition \ref{CBCSSOS}, we work with the union of the sets $X_1, X_2,$ and $X_3$, which may lead to a conservative estimate of the S-CBC. 
\begin{figure}
\centering
\includegraphics[width = 4.9 in]{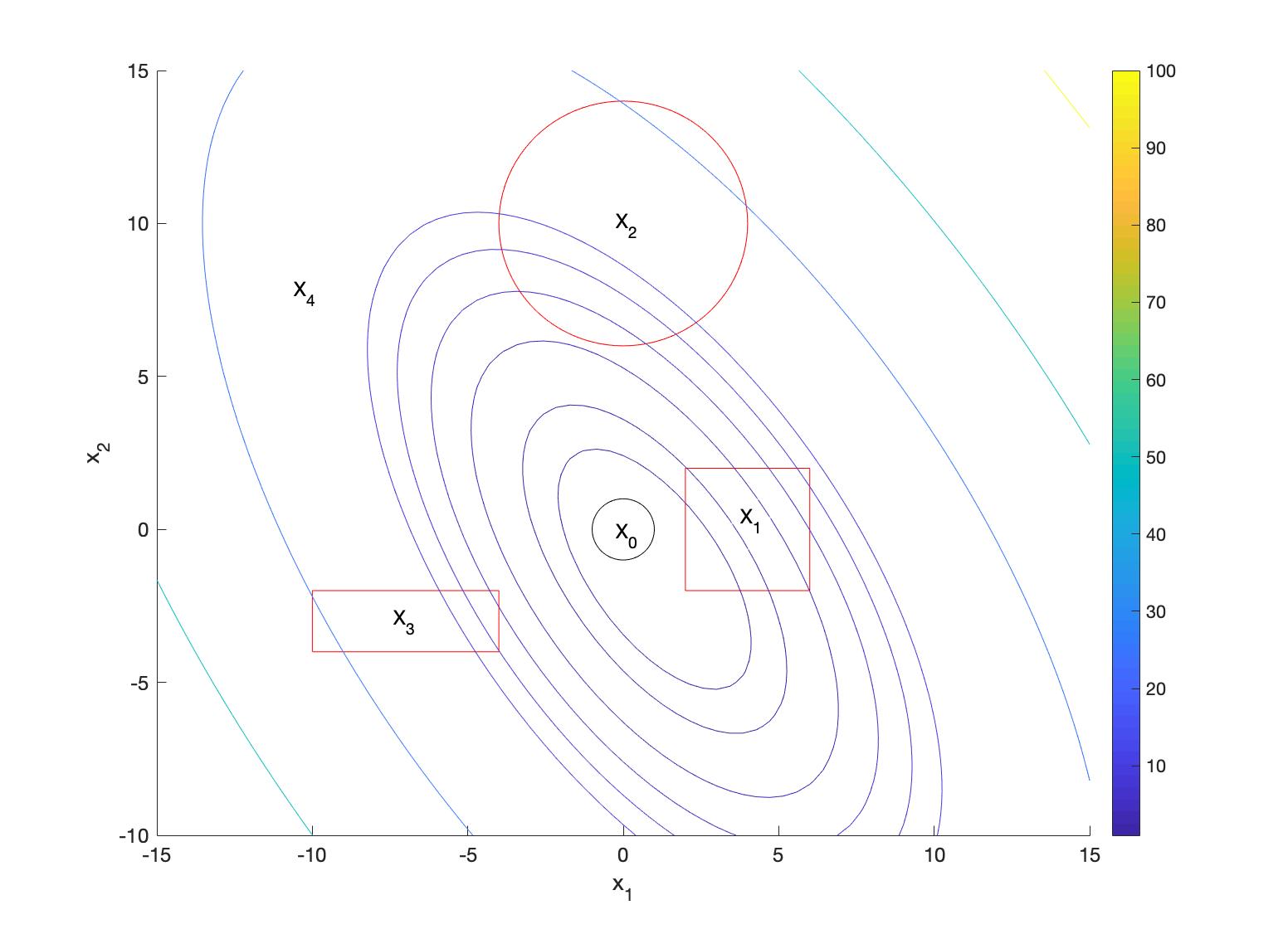}
\caption{\label{Regions}The regions $X_0, X_1, X_2, X_3,X_4$ along with the computed secure control barrier certificate (S-CBC): $B(x)=0.1915x_1^2 + 0.1868x_1x_2 - 0.144x_1 + 0.1201x_2^2 + 0.1239x_2+0.16$. The regions with red boundaries ($X_1, X_2, X_3$) denote obstacles in the environment. $X_0$ is the set from which the agent starts at time $0$. The contours show the values of the S-CBC of degree 2 ranging from $1$ to $100$.}
\end{figure}

From Theorem \ref{S-CBCThm} and the computed value of $\delta$, we have that
\begin{align*}
\sup \limits_{\mu^{(d)}}~ \inf \limits_{\mu^{(a)}}~  \mathbb{P}_{\mu^{(d)}, \mu^{(a)}}^{x_0}\{L(\mathbf{x}_{\mathcal{N}}) \models \phi\} \geq 0.9922.
\end{align*} 
This bound is conservative in the sense that we consider defender inputs $u_d$ for only the extreme values of $u_a = -1$ and $u_a =1$. 
However, for the dynamics in Equation (\ref{egEqn}), if the last inequality in Proposition \ref{CBCSSOS} is non-negative for both $u_a = -1$ and $u_a =1$, then for any $u_a \in [-1,1]$, this quantity will be non-negative. 

Determining methods to explicitly compute a defender policy and considering S-CBCs of higher degree is an area of future research. 

\end{example}

\section{Conclusion}\label{Conclusion}

This paper introduced a new class of barrier certificates to provide probabilistic guarantees on the satisfaction of temporal logic specifications for CPSs that may be affected by the actions of an intelligent adversary.
We %formulated and 
presented a solution to the problem of maximizing the probability of satisfying a temporal logic specification in the presence of an adversary. 
%This problem is relevant to CPSs that have to satisfy complex temporal properties when they are a target of attacks by an intelligent adversary. 
The interaction between the CPS and adversary was modeled as a discrete-time dynamic stochastic game with the CPS as defender. The evolution of the state of the game was influenced jointly by the actions of both players. 
A dynamic programming based approach was used to synthesize a policy for the defender in order to maximize this satisfaction probability under any adversary policy. 
We introduced secure control barrier certificates, an entity that allowed us to determine a lower bound on the satisfaction probability. 
The S-CBC was explicitly computed for a certain class of dynamics using sum-of-squares optimization. 
%The advantage of using barrier certificates is that we did not have to compute sets of reachable states of the system. 
An example illustrated our approach. 

Our example may have resulted in conservative bounds for the satisfaction probabilities since we restrict our focus to barrier certificates that are second degree polynomials and to stationary policies for the two agents. 
Future work will seek to study conditions under which possibly more effective non-stationary agent policies and higher degree S-CBCs can be deployed to solve the problem. 
A second interesting problem over a finite time-horizon is to investigate if explicit time bounds can be enforced on the temporal logic formula. 
An example of such a property is that the agent is required to reach a subset of states of the system between $3$ and $5$ minutes. 
This formula cannot be encoded in LTL, but there are other temporal logic frameworks like metric interval temporal logic \cite{alur1996benefits} or signal temporal logic \cite{maler2004monitoring} that will allow us to express it. 
We propose to study the case when the system will have to satisfy other kinds of timed temporal specifications \cite{bouyer2017timed} in the presence of an adversary in dynamic environments.

\bibliographystyle{splncs04}
\bibliography{ContBarrCert.bib}
\end{document}